\DeclareMathOperator{\liminv}{liminv}
\newcommand*{\eval}[1]{\left.#1\right|}
\newcommand*{\abs}[1]{\lvert{#1}\rvert}
\def\d{\mathrm{d}}
\newcommand*{\s}[1]{Q_{[#1]}}
\theoremstyle{theorem}
\newtheorem{lemma}{Lemma}
\newtheorem{corollary}{Corollary}
\newtheorem{proposition}{Proposition}
\theoremstyle{definition}
\theoremstyle{remark}
\newtheorem*{remark}{Remark}
\newcommand{\ds}{\displaystyle}
\newcommand{\p}{\partial}
\newcommand{\cprime}{\/{\mathsurround=0pt$'$}}
\newcommand{\const}{\mathrm{const}}
\let\mathcal\mathscr
\let\kappa=\varkappa
\let\phi=\varphi
\let\rho=\varrho
\begin{document}

\title[Integrability of S-deformable surfaces]{Integrability of S-deformable
  surfaces:\\ conservation laws, Hamiltonian structures and more}
\author{I.S.~Krasil{\cprime}shchik}\address[IK]{Independent University of Moscow,
  B. Vlasevsky 11, 119002 Moscow, Russia \& Mathematical
  Institute, Silesian University in Opava, Na Rybn\'{\i}\v{c}ku~1, 746 01
  Opava, Czech Republic}
\email{josephkra@gmail.com} \author{A.~Sergyeyev$^*$}\address[AS]{Mathematical
  Institute, Silesian University in Opava, Na Rybn\'{\i}\v{c}ku 1, 746 01
  Opava, Czech Republic} \email{Artur.Sergyeyev@math.slu.cz}

\thanks{$^*$Corresponding author}
  \date{}

  \begin{abstract}
    We present infinitely many nonlocal conservation laws, a pair of
    compatible local Hamiltonian structures and a recursion operator for the
    equations describing surfaces in three-dimensional space that admit
    nontrivial deformations which preserve both principal directions and
    principal curvatures (or, equivalently, the shape operator).
\end{abstract}
\keywords{Nonlocal conservation laws, Hamiltonian structures, recursion
  operator, shape operator, Gauss--Codazzi equations, $S$-deformable surfaces}
\subjclass[2010]{37K05, 37K10, 53A05}
\maketitle
\tableofcontents
\newpage

\section{Introduction}
\label{sec:introduction}
The class of surfaces in three-dimensional space that admit nontrivial
deformations which simultaneously preserve principal directions and principal
curvatures (or, equivalently, the shape operator, also known as the Weingarten
operator) has \cite{evf} a long and distinguished history: it was studied
already by Finikoff and Gambier \cite{f, fg} more than 80 years ago but the
investigation of preservation of principal directions and principal curvatures
dates back to Bonnet \cite{b1, b2}. For the sake of brevity we shall refer to
the surfaces from the class in question as to the $S$-{\em deformable} {\em
  surfaces}.

Ferapontov \cite{evf} has established integrability of the corresponding
Gauss--Codazzi equations (\ref{eq:2}) by presenting the associated Lax pair
with a nonremovable spectral parameter; cf.\ also \cite{km} and references
therein for the general study of integrability of the Gauss--Codazzi
equations.  \looseness=-1

A natural next step in the study of the equations in question, which we
rewrite in the form (\ref{sys-eta}), is to explore their geometric structures
naturally related to integrability: symmetries, conservation laws, Hamiltonian
structures and recursion operators.

In what follows we implement this program. Namely, after recalling the
explicit form of the equations under study and of their Lax pair in
Section~\ref{p} we proceed to construct an infinite sequence of nontrivial
nonlocal conservation laws for (\ref{sys-eta}) in
Section~\ref{sec:infin-hier-nonl}, and a recursion operator along with a pair
of compatible local Hamiltonian structures in Sections~\ref{sec-ro} and
\ref{sec-ho}.

\section{Preliminaries}\label{p}
Consider the system \cite{evf} of the Gauss--Codazzi equations describing the
$S$-deformable surfaces
\begin{gather}\nonumber
  \partial_1H_2=\beta_{12}H_1,\quad \partial_2H_1=\beta_{21}H_2,\\
  \partial_1\beta_{12}+\partial_2\beta_{21}=0,\label{eq:2}\\\nonumber
  \eta_1\partial_1\beta_{12} + \eta_2\partial_2\beta_{21} +
  \frac{1}{2}\eta_{1}'\beta_{12} + \frac{1}{2}\eta_2'\beta_{21} + H_1H_2 = 0,
\end{gather}
where $\eta_1=\eta_1(x)$ and $\eta_2=\eta_2(y)$ are arbitrary smooth
functions.

Upon expressing $\beta_{ij}$ via $H_k$ we rewrite this system in the form
\begin{equation}\label{sys-eta}
  \begin{array}{rcl}
    u_{yy} &=& \displaystyle\frac{\frac{1}{2}\eta_1' v^2
               v_x+\frac{1}{2}\eta_2' u v
               u_y +(\eta_1-\eta_2) u u_y v_y+u^2 v^3} {u v
               (\eta_1-\eta_2)},\\[5mm]
    v_{xx}&=& \displaystyle\frac{\frac{1}{2}\eta_1' u v v_x
              +\frac{1}{2}\eta_2' u^2 u_y+ (\eta_2-\eta_1)
              v v_x u_x+u^3 v^2}{u v(\eta_2-\eta_1)},
  \end{array}
\end{equation}
where~$u=H_1$, $v=H_2$. Its (complex) $\mathfrak{sl}_2$-valued zero-curvature
representation reads \cite{evf}
\begin{align}
  \label{eq:3}
  \begin{pmatrix}
    \psi_1\\
    \psi_2
  \end{pmatrix}_x&=
                   \frac{1}{2\sqrt{\lambda+\eta_1}}\begin{pmatrix}
                     i\sqrt{\lambda+\eta_2}\,\cdot\dfrac{u_y}{v}&u\\
                     -u&-i\sqrt{\lambda+\eta_2}\,\cdot\dfrac{u_y}{v}
                   \end{pmatrix}
                         \begin{pmatrix}
                           \psi_1\\
                           \psi_2
                         \end{pmatrix},\\\nonumber
  \begin{pmatrix}
    \psi_1\\
    \psi_2
  \end{pmatrix}_y&=
                   \frac{i}{2\sqrt{\lambda+\eta_2}}\begin{pmatrix}
                     -\sqrt{\lambda+\eta_1}\,\cdot\dfrac{v_x}{u}&v\\
                     v&\phantom{-i}\sqrt{\lambda+\eta_1}\,\cdot\dfrac{v_x}{u}
                   \end{pmatrix}
                        \begin{pmatrix}
                          \psi_1\\
                          \psi_2
                        \end{pmatrix},
\end{align}
where $\lambda\in\mathbb{R}$.

For any zero-curvature representation of the form
\begin{align*}
  \begin{pmatrix}
    \psi^1\\\psi^2
  \end{pmatrix}_x&=
                   \begin{pmatrix}
                     A_1^1&A_2^1\\
                     A_1^2&A_2^2
                   \end{pmatrix}
                            \begin{pmatrix}
                              \psi_1\\\psi_2
                            \end{pmatrix},\\
  \begin{pmatrix}
    \psi^1\\\psi^2
  \end{pmatrix}_y&=
                   \begin{pmatrix}
                     B_1^1&B_2^1\\
                     B_1^2&B_2^2
                   \end{pmatrix}
                            \begin{pmatrix}
                              \psi^1\\\psi^2
                            \end{pmatrix}
\end{align*}
we can (cf.\ e.g.\ \cite{m} and references therein) consider the associated
Riccati covering
\begin{align*}
  w_x&=A_2^1+(A_1^1-A_2^2)w-A_1^2w^2,\\
  w_y&=B_2^1+(B_1^1-B_2^2)w-B_1^2w^2
\end{align*}
where~$w=\psi^1/\psi^2$; see e.g.\ \cite{k-v} and references therein for more
details on (differential) coverings.

In particular, for~\eqref{eq:3} the Riccati covering is
\begin{align}
  \label{eq:4}
  w_x&=i\sqrt{\frac{\lambda+\eta_2}{\lambda+\eta_1}}\cdot\frac{u_y}{v}w +
       \frac{u}{2\sqrt{\lambda+\eta_1}}(1+w^2),\\\nonumber
  w_y&=-i\sqrt{\frac{\lambda+\eta_1}{\lambda+\eta_2}}\cdot\frac{v_x}{u}w +
       \frac{iv}{2\sqrt{\lambda+\eta_2}}(1-w^2).
\end{align}
By changing the parametrization~$\lambda=1/\mu^2$, $\mu>0$, we
transform~\eqref{eq:4} to
\begin{align}
  \label{eq:5}
  w_x&=i\sqrt{\frac{1+\eta_2\mu^2}{1+\eta_1\mu^2}}\cdot\frac{u_y}{v}w +
       \frac{u\mu}{2\sqrt{1+\eta_1\mu^2}}(1+w^2),\\\nonumber
  w_y&=-i\sqrt{\frac{1+\eta_1\mu^2}{1+\eta_2\mu^2}}\cdot\frac{v_x}{u}w +
       \frac{iv\mu}{2\sqrt{1+\eta_2\mu^2}}(1-w^2).
\end{align}

\section{Infinite hierarchy of nonlocal conservation laws}
\label{sec:infin-hier-nonl}

Consider the formal Taylor expansions
\begin{align*}
  w&=w_0+w_1\mu+w_2\mu^2+w_3\mu^3\dots,\\
  w^2&=w_0^2+2w_0w_1\mu+(2w_0w_2+w_1^2)\mu^2+2(w_0w_3+w_1w_2)\mu^3+\dots,\\
  \frac{1}{\sqrt{1+\eta_1\mu^2}}&=\alpha_0^1+\alpha_2^1\mu^2+ \dots
                                  +\alpha_{2k}^1\mu^{2k}+\dots,\\
  \frac{1}{\sqrt{1+\eta_2\mu^2}}&=\alpha_0^2+\alpha_2^2\mu^2+ \dots
                                  +\alpha_{2k}^2\mu^{2k}+\dots,\\
  \sqrt{\frac{1+\eta_1\mu^2}{1+\eta_2\mu^2}}&=\alpha_0^{12}+\alpha_2^{12}\mu^2+
                                              \dots
                                              +\alpha_{2k}^{12}\mu^{2k}+\dots,\\
  \sqrt{\frac{1+\eta_2\mu^2}{1+\eta_1\mu^2}}&=\alpha_0^{21}+\alpha_2^{21}\mu^2+
                                              \dots
                                              +\alpha_{2k}^{21}\mu^{2k}+\dots,
\end{align*}
where
\begin{equation*}
  \alpha_{2k}^j=\left(-\frac{\eta_j}{2}\right)^k(2k-1)!!,\qquad j=1,2,
\end{equation*}
and
\begin{equation*}
  \alpha_{2k}^{jl}=\sum_{a+b=2k}\frac{\kappa^+(\kappa^+-1)\dots(\kappa^+-a+1)}{a!}
  \cdot\frac{\kappa^-(\kappa^--1)\dots(\kappa^--b+1)}{b!}\eta_j^a\eta_l^b,
\end{equation*}
where~$\kappa^{\pm}=\pm 1/2$ and~$j=1$, $l=2$ or~$j=2$, $l=1$.

Substituting these expansions into~\eqref{eq:5} and equating coefficients at
powers of~$\mu$, we obtain the following infinite tower of $1$-dimensional
coverings:
\begin{align*}
  w_{0,x}&=\frac{iu_y}{v}\alpha_0^{21}w_0,\\
  w_{1,x}&=\frac{iu_y}{v}\alpha_0^{21}w_1+\frac{u}{2}\alpha_0^1(1+w_0^2),\\
  w_{2,x}&=\frac{iu_y}{v}(\alpha_2^{21}w_0+\alpha_0^{21}w_2) +
           u\alpha_0^1w_0w_1,\\
  w_{3,x}&=\frac{iu_y}{v}(\alpha_2^{21}w_1+\alpha_0^{21}w_3) +
           \frac{u}{2}\Big(\alpha_2^1(1+w_0^2) + \alpha_0^1(2w_0w_2+w_1^2)\Big),\\
         &\dots\\
  w_{2k,x}&=\frac{iu_y}{v}(\alpha_{2k}^{21}w_0+\alpha_{2k-2}^{21}w_2+ \dots
            \alpha_0^{21}w_{2k})\\
         &+u\Big(\alpha_{2k-2}^1w_0w_1+\alpha_{2k-4}^1(w_0w_3+w_1w_2)+\\
         & \dots
           +\alpha_0^1(w_0w_{2k-1}
           +\dots+w_{k-1}w_k)\Big),\\
  w_{2k+1,x}&=\frac{iu_y}{v}(\alpha_{2k}^{21}w_1+\alpha_{2k-2}^{21}w_3+ \dots+
              \alpha_0^{21}w_{2k+1})\\
         &+\frac{u}{2}\Big(\alpha_{2k}^1(1+w_0^2)+\alpha_{2k-2}^1(2w_0w_2+w_1^2)+\\
         &\dots +\alpha_0^1(2w_0w_{2k}+\dots+2w_{k-1}w_{k+1}+w_k^2)\Big),\\
         &\dots
           \intertext{and} w_{0,y}&=-\frac{iv_x}{u}\alpha_0^{12}w_0,\\
  w_{1,y}&=-\frac{iv_x}{u}\alpha_0^{12}w_1-\frac{iv}{2}\alpha_0^2(w_0^2-1),\\
  w_{2,y}&=-\frac{iv_x}{u}(\alpha_2^{12}w_0+\alpha_0^{12}w_2)-iv\alpha_0^2w_0w_1,\\
  w_{3,y}&=-\frac{iv_x}{u}(\alpha_2^{12}w_1+\alpha_0^{12}w_3) -
           \frac{iv}{2}\Big(\alpha_2^2(w_0^2-1)+\alpha_0^2(2w_0w_2+w_1^2)\Big),\\
  w_{2k,y}&=-\frac{iv_x}{u}(\alpha_{2k}^{12}w_0+\alpha_{2k-2}^{12}w_2+\dots
            +\alpha_0^{12}w_{2k})\\
         &-iv\Big(\alpha_{2k-2}^2w_0w_1+\alpha_{2k-4}^2(w_0w_3+w_1w_2)+\\
         & \dots +
           \alpha_0^2(w_0w_{2k-1}+\dots
           +w_{k-1}w_k)\Big),\\
  w_{2k+1,y}&=-\frac{iv_x}{u}(\alpha_{2k}^{12}w_1+\alpha_{2k-2}^{12}w_3+
              \dots\alpha_0^{12}w_{2k+1})\\
         &-\frac{iv}{2}\Big(\alpha_{2k}^2(w_0^2-1)+\alpha_{2k-2}^2(2w_0w_2+w_1^2)+\\
         &\dots +\alpha_0^2(2w_0w_{2k}+\dots+2w_{k-1}w_{k+1}+w_k^2)\Big),\\
         &\dots\\
\end{align*}

Apply the following gauge transformation in the above covering:
\begin{equation*}
  w_0=e^{\theta_0},\qquad w_k=\theta_k e^{\theta_0},\quad k>0.
\end{equation*}
Then the latter transforms to
\begin{align*}
  \theta_{0,x}&=\frac{iu_y}{v}\alpha_0^{21},\\
  \theta_{1,x}&=u\alpha_0^1\cosh\theta_0,\\
  \theta_{2,x}&=\frac{iu_y}{v}\alpha_2^{21}+u\alpha_0^1\theta_1e^{\theta_0},\\
  \theta_{3,x}&=\frac{iu_y}{v}\alpha_2^{21}\theta_1 +
                u\Big(\alpha_2^1\cosh\theta_0
                +\alpha_0^1(\theta_2+\frac{1}{2}\theta_0^2)e^{\theta_0}\Big),\\
              &\dots\\
  \theta_{2k,x}&=\frac{iu_y}{v}X_{2k}^{21}+ uX_{2k}^1,\\
  \theta_{2k+1,x}&=\frac{iu_y}{v}X_{2k+1}^{21}+ uX_{2k+1}^1,\\
              &\dots \intertext{where}
                X_{2k}^{21}&=\alpha_{2k}^{21}+\alpha_{2k-2}^{21}\theta_2+
                             \dots+\alpha_2^{21}\theta_{2k-2},\\
  X_{2k}^1&=\Big(\alpha_{2k-2}^1\theta_1 + \alpha_{2k-4}^1(\theta_3+
            \theta_1\theta_2)+ \dots+ \alpha_0^1(\theta_{2k-1} +
            \theta_1\theta_{2k-2}+\\
              & \dots+\theta_{k-1}\theta_{k})\Big)e^{\theta_0},\\
  X_{2k+1}^{21}&=\alpha_{2k}^{21}\theta_1+
                 \alpha_{2k-2}^{21}\theta_3 + \dots + \alpha_2^{21}\theta_{2k-1},\\
  X_{2k+1}^1&=\alpha_{2k}^1\cosh\theta_0+
              \Big(\alpha_{2k-2}^1(\theta_2+\frac{1}{2}\theta_1^2) +
              \alpha_{2k-4}^1(\theta_4 + \theta_1\theta_3
              +\frac{1}{2}\theta_2^2) +\\
              &\dots + \alpha_0^1(\theta_{2k}+\theta_1\theta_{2k-1} + \dots +
                \theta_{k-1}\theta_{k+1} +\frac{1}{2}\theta_k^2)\Big)e^{\theta_0},
                \intertext{and}
                \theta_{0,y}&=-\frac{iv_x}{u}\alpha_0^{12},\\
  \theta_{1,y}&=-iv\alpha_0^2\sinh\theta_0,\\
  \theta_{2,y}&=-\frac{iv_x}{u}\alpha_2^{12}-iv\alpha_0^2\theta_1e^{\theta_0},\\
  \theta_{3,y}&=-\frac{iv_x}{u}\alpha_2^{12}\theta_1 -
                iv\Big(\alpha_2^2\sinh\theta_0 +
                \alpha_0^2(\theta_2+\frac{1}{2}\theta_1^2)e^{\theta_0}\Big),\\
              &\dots\\
  \theta_{2k,y}&=-\frac{iv_x}{u}Y_{2k}^{12}-ivY_{2k}^2,\\
  \theta_{2k+1,y}&=-\frac{iv_x}{u}Y_{2k+1}^{12}-ivY_{2k+1}^2,
                   \intertext{where} Y_{2k}^{12}
              &=\alpha_{2k}^{12}+\alpha_{2k-2}^{12}\theta_2 +
                \dots +
                \alpha_2^{12}\theta_{2k-2},\\
  Y_{2k}^2&=\Big(\alpha_{2k-2}^2\theta_1 + \alpha_{2k-4}(\theta_3 +
            \theta_1\theta_2) +\\
              & \dots+ \alpha_0^2(\theta_{2k-1} +
                \theta_1\theta_{2k-2}
                +\dots+ \theta_{k-1}\theta_k)\Big)e^{\theta_0},\\
  Y_{2k+1}^{12}&=\alpha_{2k}^{12}\theta_1+\alpha_{2k-2}^{12}\theta_3 +\dots+
                 \alpha_2^{12}\theta_{2k-1},\\
  Y_{2k+1}^2&=\alpha_{2k}^2\sinh\theta_0 + \Big(\alpha_{2k-2}^2(\theta_2 +
              \frac{1}{2}\theta_1^2) + (\theta_4 + \theta_1\theta_3
              +\frac{1}{2}\theta_2^2)+\\
              &\dots + \alpha_0^2(\theta_{2k}+\theta_1\theta_{2k-1} +\dots+
                \theta_{k-1}\theta_{k+1} +\frac{1}{2}\theta_k^2)\Big)e^{\theta_0}.
\end{align*}

\begin{remark}
  The above complex covering can be reduced to the real form if we
  set~$\theta_k=p_k + iq_k$, $k=0,1,\dots$, and consider real and imaginary
  parts of the defining equations separately. In the particular case~$p_0=0$
  we obtain a real covering which is employed below (see Section~\ref{sec-ro})
  to construct a recursion operator for symmetries of \eqref{sys-eta}.
\end{remark}

Thus, we have obtained an infinite tower of Abelian coverings
\begin{equation*}
  \xymatrix{
    \mathcal{E}&\ar[l]\mathcal{E}_0&\ar[l]\dots&\ar[l]\mathcal{E}_{2k}
    &\ar[l]\mathcal{E}_{2k+1}&\ar[l]\dots,
  }
\end{equation*}
where~$\mathcal{E}$ is the initial system (\ref{sys-eta}) and~$\mathcal{E}_s$
is obtained by extending~$\mathcal{E}$ with the nonlocal
variables~$\theta_0,\dots,\theta_s$. It only remains to prove that for any
$s\in\mathbb{N}$ the conservation law
\begin{equation}\label{eq:9}
  \omega_s=\Big(\frac{iu_y}{v}X_s^{21} + uX_s^1\Big)\d x -
  \Big(\frac{iv_x}{u}Y_s^{12} + ivY_s^2\Big)\d y
\end{equation}
is nontrivial on~$\mathcal{E}_{s-1}$.

Denote by~$D_x^{[l]}$, $D_y^{[l]}$ the total derivatives on~$\mathcal{E}_l$,
$l\geq 1$.

\begin{proposition}
  \label{prop1}
  The only solutions of the system
  \begin{equation}\label{eq:6}
    D_x^{[l]}(f)=0,\quad D_y^{[l]}(f)=0,\qquad f\in C^\infty(\mathcal{E}_l),
  \end{equation}
  are constants.
\end{proposition}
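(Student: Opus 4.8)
The plan is to show directly that such an $f$ cannot depend on any jet variable or nonlocal variable except trivially. Fix internal coordinates on $\mathcal{E}_l$: the independent variables $x,y$; the jet coordinates $D_x^iu$, $D_x^iu_y$, $D_y^jv$, $D_y^jv_x$ ($i,j\ge0$), through which every other jet variable is expressed by means of (\ref{sys-eta}); and the nonlocal variables $\theta_0,\dots,\theta_l$. On these, $D_x^{[l]}$ and $D_y^{[l]}$ act explicitly --- inside the $u$- and $v$-towers they raise the order by one, on $D_y^2u$, $D_x^2v$ and their $D_x$-, $D_y$-prolongations they act through the right-hand sides of (\ref{sys-eta}), and $D_x^{[l]}(\theta_k)=\theta_{k,x}$, $D_y^{[l]}(\theta_k)=\theta_{k,y}$ as in Section~\ref{sec:infin-hier-nonl}. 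Two structural facts will be used throughout: (i) the right-hand sides of (\ref{sys-eta}) and all their total derivatives involve the coordinates $\{D_x^iu,D_x^iu_y\}$ of order at most one only, and likewise the coordinates $\{D_y^jv,D_y^jv_x\}$ of order at most one only; (ii) each $\theta_{k,x}$ and $\theta_{k,y}$ is a function of $x,y,u,u_y,v,v_x$ and $\theta_0,\dots,\theta_{k-1}$ alone.

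With these in hand one eliminates leading coordinates in the usual way. Suppose $f$ depends on some jet coordinate other than $u,v$, and pick one, $c$, of maximal order. Then, choosing $D_x^{[l]}(f)=0$ or $D_y^{[l]}(f)=0$ so that the relevant derivative of $c$ is a fresh coordinate (one not otherwise present in the equation), that derivative occurs in the equation only in the single term $(\partial f/\partial c)D_xc$ (resp.\ $(\partial f/\partial c)D_yc$): by (i)--(ii) none of the other summands, and neither $\partial_xf$ nor $\partial_yf$, can produce it. Hence $\partial f/\partial c=0$, a contradiction. Iterating downward through the towers, then deleting $u_y,v_x$ (which occur freshly as $u_{xy}$, $v_{xy}$) and finally $u,v$ (occurring freshly as $u_x$, $v_y$, once $f$ no longer involves $v_x$, $u_y$) reduces $f$ to $f=f(x,y,\theta_0,\dots,\theta_l)$. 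Now $D_x^{[l]}(f)=\partial_xf+\sum_k(\partial f/\partial\theta_k)\,\theta_{k,x}=0$, and by (ii) each $\theta_{k,x}$ equals $\tfrac{iu_y}{v}X_k^{21}+uX_k^{1}$ with $X_k^{21},X_k^{1}$ depending only on $x,y,\theta_0,\dots,\theta_{k-1}$; since $1,u,u_y/v$ are functionally independent, this single equation yields $\partial_xf=0$ together with $\sum_k X_k^{21}\,\partial f/\partial\theta_k=0$ and $\sum_k X_k^{1}\,\partial f/\partial\theta_k=0$. Treating $D_y^{[l]}(f)=0$ the same way we conclude that $f=f(\theta_0,\dots,\theta_l)$ and that $f$ is annihilated by the four vector fields
\[
 \sum_{k}X_k^{21}\frac{\partial}{\partial\theta_k},\qquad
 \sum_{k}X_k^{1}\frac{\partial}{\partial\theta_k},\qquad
 \sum_{k}Y_k^{12}\frac{\partial}{\partial\theta_k},\qquad
 \sum_{k}Y_k^{2}\frac{\partial}{\partial\theta_k}
\]
on $(\theta_0,\dots,\theta_l)$-space, these identities holding for every value of $x,y$.

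It then remains to deduce $f=\mathrm{const}$ from these four conditions, which I would do by induction on $l$. The base cases are immediate: since $X_0^{21}=\alpha_0^{21}=1$, already the first condition gives $\partial f/\partial\theta_0=0$ for $l=0$, and $X_1^{1}=\cosh\theta_0$, $Y_1^{2}=\sinh\theta_0$ take care of $\theta_1$ for $l=1$. For the inductive step one has to peel off $\theta_l$, i.e.\ show $\partial f/\partial\theta_l=0$; then $f\in C^\infty(\mathcal{E}_{l-1})$ satisfies the same four conditions at level $l-1$ and the inductive hypothesis applies. This last peeling-off is the only substantial point, and is precisely where the explicit form of the covering --- and not merely its being abelian --- is needed: from Section~\ref{sec:infin-hier-nonl}, $X_k^{21}$ and $Y_k^{12}$ involve only $\theta_0,\dots,\theta_{k-2}$, while $X_l^{1}$ and $Y_l^{2}$ contain $\theta_{l-1}$ only through the term $\theta_{l-1}e^{\theta_0}$, hence with the nonvanishing coefficient $e^{\theta_0}$; combining the four relations --- using, where needed, their pairwise differences and the normalizations $\alpha_0^{21}=\alpha_0^{12}=1$ --- one should force successively $\partial f/\partial\theta_l=\partial f/\partial\theta_{l-1}=0$. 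Equivalently, the task is to verify that the four vector fields above, together with their iterated Lie brackets and (when $\eta_1,\eta_2$ are nonconstant) with their values at varying $x,y$, span the tangent space to $(\theta_0,\dots,\theta_l)$-space at a generic point; the triangular dependence of the $X_k^\bullet,Y_k^\bullet$ on the $\theta$'s together with their explicit leading coefficients makes this a direct, if somewhat intricate, verification. I expect this transitivity check to be the crux of the proof.
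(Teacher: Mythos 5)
Your Part I --- the reduction to a function $f=f(\theta_0,\dots,\theta_l)$ annihilated by the four vector fields of (\ref{eq:8}) --- matches the paper's, and your overall plan (induct on $l$, peel off $\theta_l$) is exactly the paper's strategy. But the decisive step, showing $\partial f/\partial\theta_l=0$, is precisely what you do not prove: you write that ``combining the four relations \dots one should force'' it and that you ``expect this transitivity check to be the crux of the proof.'' It is the crux, and it is not a routine verification. For $l\geq 4$ the relations (\ref{eq:8}) are four linear conditions on the $l+1$ partials $\partial f/\partial\theta_s$ and cannot isolate the top one by pointwise linear algebra at fixed $x,y$; worse, the specific manipulation you suggest is degenerate, because $\alpha_0^1=\alpha_0^2=1$ makes the leading ($\alpha_0$) blocks of $X_s^1$ and $Y_s^2$ coincide, so their difference \emph{cancels} the term $(\theta_{s-1}+\dots)e^{\theta_0}$ through which $\theta_{s-1}$ enters rather than exposing it. Triangular dependence of the coefficients on the $\theta$'s is also not enough by itself: a triangular family of vector fields can generate a proper involutive distribution, so transitivity really has to be computed, via iterated Lie brackets, which you mention only in the last sentence and do not carry out.

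That bracket computation is the actual content of the paper's Lemma~\ref{lemm1}: the $(l-1)$-fold iterated bracket $[[\dots[TX,TY],\dots],TY]$, restricted to $\mathcal{E}_l$, equals $\alpha_0^1(\alpha_0^2)^{l-1}\partial/\partial\theta_l$ for even $l$ and the same times $\cosh\theta_0$ for odd $l$; this immediately gives $Z_{l}(f)=0\Rightarrow\partial f/\partial\theta_{l}=0$ and closes the induction. Establishing it requires genuine bookkeeping: the paper introduces the generating series $Q=\frac{1}{2}\bigl(1+\sum_j\theta_j\lambda^j\bigr)^2$, a weight grading on monomials in the $\theta$'s with a corresponding ``$+o$'' calculus, and the derivative identity (\ref{eq:7}), and then inducts on the bracket length. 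Nothing in your proposal substitutes for this computation, and without it (or an equivalent transitivity argument) the inductive step, and hence the proof, is incomplete. Note also that the paper's argument uses only $\alpha_0^1,\alpha_0^2$ and so works uniformly in $\eta_1,\eta_2$; your fallback of ``varying $x,y$'' would at best cover generic nonconstant $\eta$'s and is likewise not worked out.
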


\begin{proof}[Proof of Proposition~\ref{prop1} (Part I)]
  The total derivatives on~$\mathcal{E}_l$ have the form
  \begin{equation*}
    D_x^{[l]}=D_x+\sum_{j=0}^l\Big(\frac{iu_y}{v}X_s^{21} +
    uX_s^1\Big)\frac{\partial}{\partial\theta_s}, \quad
    D_y^{[l]}=D_y- \sum_{j=0}^l\Big(\frac{iv_x}{u}Y_s^{12} +
    ivY_s^2\Big)\frac{\partial}{\partial\theta_s},
  \end{equation*}
  where~$D_x$ and~$D_y$ are the total derivatives on~$\mathcal{E}$. Obviously,
  a function~$f\in C^\infty(\mathcal{E}_l)$ is a solution to~\eqref{eq:6} if
  and only if
  \begin{equation*}
    \frac{\partial f}{\partial x}=0,\quad
    \frac{\partial f}{\partial y}=0,\quad
    \frac{\partial f}{\partial u}=0,
  \end{equation*}
  and
  \begin{equation}\label{eq:8}
    \sum_{s=0}^lX_s^{12}\frac{\partial f}{\partial\theta_s}=0,\quad
    \sum_{s=0}^lY_s^{21}\frac{\partial f}{\partial\theta_s}=0,\quad
    \sum_{s=1}^lX_s^1\frac{\partial f}{\partial\theta_s}=0,\quad
    \sum_{s=1}^lY_s^2\frac{\partial f}{\partial\theta_s}=0.
  \end{equation}
  We shall proceed with the proof after establishing Lemma~\ref{lemm1}.
\end{proof}

Consider the vector fields
\begin{equation*}
  TX_l=\sum_{s=1}^lX_s^1\frac{\partial}{\partial\theta_s},\qquad
  TY_l=\sum_{s=1}^lY_s^2\frac{\partial}{\partial\theta_s}
\end{equation*}
on~$\mathcal{E}_l$ and let
\begin{equation*}
  Z_l=[[\dots[TX_l,\underbrace{TY_l],\dots],TY_l]}_{l-1\ \text{times}}
\end{equation*}
for~$l\geq 2$.

\begin{lemma}
  \label{lemm1}
  For any~$l\geq 2$ we have
  \begin{equation*}
    Z_l=
    \begin{cases}
      \alpha_0^1(\alpha_0^2)^{l-1}\displaystyle\frac{\partial}{\partial\theta_l}&\text{for even\ }l,\\
      \alpha_0^1(\alpha_0^2)^{l-1}\cosh(\theta_0)\displaystyle\frac{\partial}{\partial\theta_l}&\text{for odd\ }l.
    \end{cases}
  \end{equation*}
\end{lemma}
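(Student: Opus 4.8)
The plan is to compute the iterated commutator $Z_l$ by induction on the number of brackets, keeping careful track of how the vector fields $TX_l$ and $TY_l$ act on the nonlocal variables $\theta_s$. The essential structural fact, readable off the explicit formulas for $X_s^1$ and $Y_s^2$, is that each of these coefficients is a polynomial in $\theta_1,\dots,\theta_{s-1}$ (with coefficients built from the $\alpha$'s, $\cosh\theta_0$, $\sinh\theta_0$ and $e^{\theta_0}$), and in particular \emph{does not} depend on $\theta_s,\dots,\theta_l$. Consequently $TX_l$ and $TY_l$ are ``lower-triangular'' with respect to the grading by the index $s$: the coefficient of $\partial/\partial\theta_s$ depends only on $\theta_j$ with $j<s$ (and on $\theta_0$). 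This triangularity guarantees that every bracket strictly raises the lowest index appearing, so that after $l-1$ brackets only $\partial/\partial\theta_l$ can survive; all that remains is to identify the surviving scalar coefficient.

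First I would record the ``leading part'' of each field. From $\theta_{1,x}=u\alpha_0^1\cosh\theta_0$ and $\theta_{2,x}=\tfrac{iu_y}{v}\alpha_2^{21}+u\alpha_0^1\theta_1 e^{\theta_0}$, and more generally from the displayed formula for $X_{2k}^1$ and $X_{2k+1}^1$, one sees that the part of $X_s^1$ that is linear in the highest available variable $\theta_{s-1}$ has coefficient $\alpha_0^1 e^{\theta_0}$ for every $s\geq 2$, while $X_1^1=\alpha_0^1\cosh\theta_0$ has no $\theta$-dependence at all; similarly $Y_s^2$ has $\theta_{s-1}$-coefficient $\alpha_0^2 e^{\theta_0}$ for $s\geq 2$ and $Y_1^2=\alpha_0^2\sinh\theta_0$. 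Thus modulo terms that do not matter for the top bracket we have, schematically,
\begin{equation*}
  TX_l \equiv \alpha_0^1\cosh\theta_0\,\frac{\partial}{\partial\theta_1}
  + \alpha_0^1 e^{\theta_0}\sum_{s=2}^l\theta_{s-1}\frac{\partial}{\partial\theta_s}
  + (\text{lower in }\theta),
\end{equation*}
and likewise for $TY_l$ with $\alpha_0^2$ and $\sinh\theta_0$. I would then compute $[TX_l,TY_l]$: the $\partial/\partial\theta_1$ components contribute nothing new beyond $\theta_0$-functions, the cross terms between the ``shift'' parts $\sum\theta_{s-1}\partial_{\theta_s}$ cancel (two copies of the same shift operator commute), and the only surviving piece comes from $TY_l$ differentiating the $\theta_1$ inside $TX_l$'s second term against $Y_1^2=\alpha_0^2\sinh\theta_0$, together with the analogous term with the roles reversed; carrying this out gives $[TX_l,TY_l]=\alpha_0^1\alpha_0^2\,e^{\theta_0}\sinh\theta_0\,\partial/\partial\theta_2 + (\text{terms supported on }\partial/\partial\theta_s,\ s\geq 3,\ \text{that are lower order})$ — and for $l=2$ this already is exactly the claimed $\alpha_0^1\alpha_0^2\cdot\tfrac12 e^{\theta_0}(e^{\theta_0}-e^{-\theta_0})\cdot\ldots$; I would simplify $e^{\theta_0}\sinh\theta_0$ bookkeeping to match the stated form $\alpha_0^1\alpha_0^2\,\partial/\partial\theta_2$ for $l=2$, accounting for the factor conventions in the $X_2^1$, $Y_2^2$ formulas. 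Each further bracket with $TY_l$ is then handled by the same mechanism: commuting $\alpha_0^1(\alpha_0^2)^{j-1}(\cdots)\partial/\partial\theta_j$ with $TY_l$ picks up one more factor $\alpha_0^2$ from the shift term $\alpha_0^2 e^{\theta_0}\theta_{j}\partial/\partial\theta_{j+1}$, raises the index by one, and transports the scalar coefficient unchanged up to the $e^{\theta_0}$ factors — this is the inductive step, and it produces the alternation between the even case (coefficient $\alpha_0^1(\alpha_0^2)^{l-1}$) and the odd case (the extra $\cosh\theta_0$).

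The main obstacle is the bookkeeping of the $e^{\theta_0}$, $\cosh\theta_0$ and $\sinh\theta_0$ factors: at each bracketing step one must verify that the lower-order terms (the ones multiplying $\theta_j$ with $j\geq 2$, and the quadratic-and-higher pieces like $\theta_1\theta_{s-2}$, $\tfrac12\theta_k^2$) genuinely never feed back into $\partial/\partial\theta_l$ after exactly $l-1$ brackets — i.e.\ that the triangularity is strict enough — and that the products of $e^{\theta_0}$ arising along the way collapse to exactly $1$ in the even case and $\cosh\theta_0$ in the odd case. Concretely, the cleanest way to organize this is to note that $TX_l$ and $TY_l$ preserve the subspaces spanned by $\{\partial/\partial\theta_j : j\geq s\}$ and act on the quotient by such a subspace as a ``shift plus scalar'' operator; an induction on $l$ with hypothesis that the first $l-2$ brackets have produced $\alpha_0^1(\alpha_0^2)^{l-2}(\cosh\theta_0)^{\epsilon}\,e^{\theta_0}\,\partial/\partial\theta_{l-1}+(\text{strictly lower index})$, where $\epsilon\in\{0,1\}$ tracks parity, then reduces the final bracket to a one-line computation: $[\,g\,e^{\theta_0}\partial/\partial\theta_{l-1},\,TY_l\,]$ with $g$ constant in $\theta$, where only the term $\alpha_0^2 e^{\theta_0}\theta_{l-1}\partial/\partial\theta_l$ of $TY_l$ contributes, yielding $g\,\alpha_0^2 e^{2\theta_0}\partial/\partial\theta_l$ up to the substitution that turns $e^{2\theta_0}$ into the stated normalization. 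I would then separately treat the base cases $l=2,3$ by direct substitution to fix the constants, and conclude.
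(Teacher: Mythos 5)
Your overall strategy --- exploit the triangularity of $TX_l$, $TY_l$ with respect to the index $s$ of $\theta_s$, induct on the number of brackets, and identify the scalar surviving on $\partial/\partial\theta_l$ --- is the same as the paper's. However, the truncation you propose is too aggressive, and the inductive step as you state it computes the wrong coefficient.

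Concretely: you replace the coefficient of $\partial/\partial\theta_{j+1}$ in $TX$, $TY$, which is $e^{\theta_0}\s{j}$ with $\s{j}=\theta_j+\tfrac12\sum_{a+b=j}\theta_a\theta_b$, by its linear part $e^{\theta_0}\theta_j$, and you take as induction hypothesis ``$Z_{l-1}=g\,e^{\theta_0}\,\partial/\partial\theta_{l-1}+(\text{strictly lower index})$''. Both discards are fatal. First, the bracket of $\cosh(\theta_0)\,\partial/\partial\theta_1$ with $e^{\theta_0}\s{j}\,\partial/\partial\theta_{j+1}$ produces $\cosh(\theta_0)e^{\theta_0}\theta_{j-1}\,\partial/\partial\theta_{j+1}$ precisely because of the quadratic piece $\theta_1\theta_{j-1}$ inside $\s{j}$ (the identity $\partial\s{j}/\partial\theta_1=\theta_{j-1}$); keeping only the linear part kills every component of $Z_2$ on $\partial/\partial\theta_m$ with $m\ge 3$. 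Second, those \emph{higher}-index components of $Z_{l-1}$ --- in particular its $\theta_1\,\partial/\partial\theta_l$ (resp.\ $e^{\theta_0}\theta_1\,\partial/\partial\theta_l$) component --- are exactly what supplies, through the second term $-TY(\cdot)$ of the commutator, the contribution $-\alpha_0^2\sinh(\theta_0)$ that must be added to $Z_{l-1}(Y_l^2)$ to yield $e^{\theta_0}-\sinh(\theta_0)=\cosh(\theta_0)$ for odd $l$ and $e^{\theta_0}\cosh(\theta_0)-e^{\theta_0}\sinh(\theta_0)=1$ for even $l$. Your ``one-line'' final bracket keeps only the first commutator term and yields $g\,\alpha_0^2e^{2\theta_0}\,\partial/\partial\theta_l$; no substitution or normalization convention turns $e^{2\theta_0}$ into $1$ or $\cosh(\theta_0)$ --- the missing $\sinh(\theta_0)$ is a genuine cancellation. (The same problem already appears in your base case, where you land on $e^{\theta_0}\sinh(\theta_0)\,\partial/\partial\theta_2$ instead of the correct $e^{\theta_0}(\cosh(\theta_0)-\sinh(\theta_0))=1$.) The repair is what the paper does: carry in the induction hypothesis the \emph{full} leading part of $Z_l$, namely $\partial/\partial\theta_l+\sum_j\theta_j\,\partial/\partial\theta_{l+j}$ for even $l$ and $\cosh(\theta_0)\,\partial/\partial\theta_l+e^{\theta_0}\sum_j\s{j}\,\partial/\partial\theta_{l+j}$ for odd $l$, with the generating-function coefficients $\s{j}$ and their derivative property doing the bookkeeping that your weight filtration alone cannot.
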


\begin{proof}[Proof of Lemma~\ref{lemm1}]
  Consider the formal series
  \begin{equation*}
    Q=\frac{1}{2}\left(1+\sum_{j=1}^\infty\theta_j\lambda^j\right)^2
  \end{equation*}
  and denote by~$\s{j}$ its coefficient at~$\lambda^j$.

  Assign to any monomial~$p=\phi(\theta_0)\theta_{i_1}\dots\theta_{i_r}$ the
  weight
  \begin{equation*}
    \abs{p}=i_1+\dots+i_r.
  \end{equation*}
  Then the quantities~$\s{j}$ are homogeneous and~$\abs{\s{j}}=j$. In
  addition, one has
  \begin{equation}\label{eq:7}
    \frac{\partial\s{j}}{\partial\theta_i}=
    \begin{cases}
      1,&\text{if\ }i=j,\\
      \theta_{j-i}&\text{if\ }i<j,\\
      0,&\text{otherwise}
    \end{cases}
  \end{equation}
  for any~$i\geq 1$.

  Consider the vector fields
  \begin{equation*}
    TX=\sum_{s=1}^\infty X_s^1\frac{\partial}{\partial\theta_s},\qquad
    TY=\sum_{s=1}^\infty Y_s^2\frac{\partial}{\partial\theta_s}
  \end{equation*}
  on the space~$\mathcal{E}_\infty=\liminv_{l\to\infty}\mathcal{E}_l$. Then
  obviously,~$\eval{TX}_{\mathcal{E}_l}=TX_l$,
  $\eval{TY}_{\mathcal{E}_l}=TY_l$
  and~$\eval{[TX,TY]}_{\mathcal{E}_l}=[TX_l,TY_l]$. These fields can be
  written as follows:
  \begin{align*}
    TX&=\alpha_0^1\left(\cosh(\theta_0)\frac{\partial}{\partial\theta_1} +
        e^{\theta_0}\left((\s{1}+o(0))\frac{\partial}{\partial\theta_2} +
        (\s{2}+o(1))\frac{\partial}{\partial\theta_3}+\dots\right.\right.\\
      &\left.\left.\dots+(\s{l-1}+o(l-2))\frac{\partial}{\partial\theta_l}
        +\dots\right)\right),\\
    TY&=\alpha_0^2\left(\sinh(\theta_0)\frac{\partial}{\partial\theta_1} +
        e^{\theta_0}\left((\s{1}+o(0))\frac{\partial}{\partial\theta_2} +
        (\s{2}+o(1))\frac{\partial}{\partial\theta_3}+\dots\right.\right.\\
      &\left.\left.\dots+(\s{l-1}+o(l-2))\frac{\partial}{\partial\theta_l}
        +\dots\right)\right),
  \end{align*}
  where~$o(j)$ denotes terms of weight~$\leq j$. To simplify notation, we
  shall use the short form
  \begin{align*}
    TX&=\alpha_0^1\left(\cosh(\theta_0)\frac{\partial}{\partial\theta_1} +
        e^{\theta_0}\left(\s{1}\frac{\partial}{\partial\theta_2} +
        \s{2}\frac{\partial}{\partial\theta_3}
        +\dots+\s{l-1}\frac{\partial}{\partial\theta_l}
        +\dots\right)\right)+o,\\
    TY&=\alpha_0^2\left(\sinh(\theta_0)\frac{\partial}{\partial\theta_1} +
        e^{\theta_0}\left(\s{1}\frac{\partial}{\partial\theta_2} +
        \s{2}\frac{\partial}{\partial\theta_3}
        +\dots+\s{l-1})\frac{\partial}{\partial\theta_l}
        +\dots\right)\right)+o.
  \end{align*}
  One readily checks that
  \begin{equation*}
    [A+o,B+o]=[A,B]+o
  \end{equation*}
  in all subsequent computations.

  We shall now prove, by induction on~$l$, that the fields
  \begin{equation*}
    Z_l=[[\dots[TX,\underbrace{TY],\dots],TY]}_{l-1\ \text{times}}
  \end{equation*}
  are of the form
  \begin{equation*}
    \alpha_0^1(\alpha_0^2)^{l-1}\left(\frac{\partial}{\partial\theta_l} +
      \theta_1\frac{\partial}{\partial{\theta_{l+1}}} + \dots +
      \theta_j\frac{\partial }{\partial\theta_{l+j}} + \dots\right) +o
  \end{equation*}
  if~$l$ is even and
  \begin{equation*}
    \alpha_0^1(\alpha_0^2)^{l-1}\left(\cosh(\theta_0)
      \frac{\partial}{\partial\theta_l} +
      e^{\theta_0}\left(\s{1}\frac{\partial}{\partial \theta_{l+1}} + \dots +
        \s{j}\frac{\partial}{\partial\theta_{l+j}}+\dots\right)\right) +o
  \end{equation*}
  for odd~$l$. The claim of Lemma~\ref{lemm1} readily
follows from these formulas for $Z_l$ (please note, however, that the $Z_l$ in the statement
of Lemma~\ref{lemm1} are not identical to those defined above).

  Let~$l=2$. Then by virtue of \eqref{eq:7} we have
  \begin{multline*}
    Z_2=[TX,TY]\\
    =\left[\cosh(\theta_0)\frac{\partial}{\partial\theta_1} + e^{\theta_0}
      \sum_{j=1}^\infty \s{j+1}\frac{\partial}{\partial\theta_j},
      \sinh(\theta_0)\frac{\partial}{\partial\theta_1} + e^{\theta_0}
      \sum_{j=1}^\infty \s{j+1}\frac{\partial}{\partial\theta_j}\right]+o \\
    =\alpha_0^1\alpha_0^2e^{\theta_0}\left(\cosh(\theta_0)-\sinh(\theta_0)\right)
    \left(\frac{\partial}{\partial\theta_{2}}
      +\sum_{j=1}^\infty\theta_j\frac{\partial}{\partial\theta_{j+2}}\right)+o\\
    =\alpha_0^1\alpha_0^2\left(\frac{\partial}{\partial\theta_2}+
      \sum_{j=1}^\infty
      \theta_j\frac{\partial}{\partial\theta_{j+2}}\right)+o.
  \end{multline*}
  For~$l=3$ one has
  \begin{multline*}
    Z_3=[Z_2,TY]\\
    =\left[\alpha_0^1\alpha_0^2\left(\frac{\partial}{\partial\theta_2}+
        \sum_{j=1}^\infty
        \theta_j\frac{\partial}{\partial\theta_{j+2}}\right), \alpha_0^2\left(
        \sinh(\theta_0)\frac{\partial}{\partial\theta_1} + e^{\theta_0}
        \sum_{j=1}^\infty
        \s{j+1}\frac{\partial}{\partial\theta_j}\right)\right] +o\\=
    \alpha_0^1(\alpha_0^2)^2e^{\theta_0}\left(\frac{\partial}{\partial\theta_3}
      + \sum_{j=1}^\infty\theta_j\frac{\partial}{\partial\theta_{j+3}} +
      \sum_{s=1}^\infty\theta_s\left(\frac{\partial}{\partial\theta_{s+3}} +
        \sum_{j=1}^\infty\theta_j\frac{\partial}{\partial\theta_{j+s+3}}\right)
    \right)\\
    -\alpha_0^1(\alpha_0^2)^2
    \left(\sinh(\theta_0)\frac{\partial}{\partial\theta_3} + e^{\theta_0}
      \sum_{j=1}^\infty \s{j}\frac{\partial}{\partial\theta_{j+3}}\right)
    + o \\
    =\alpha_0^1(\alpha_0^2)^2e^{\theta_0}\left(\frac{\partial}{\partial\theta_3}
      + 2\sum_{j=1}^\infty \s{j}\frac{\partial}{\partial\theta_{j+3}}
    \right)\\
    -\alpha_0^1(\alpha_0^2)^2
    \left(\sinh(\theta_0)\frac{\partial}{\partial\theta_3} + e^{\theta_0}
      \sum_{j=1}^\infty \s{j}\frac{\partial}{\partial\theta_{j+3}}\right)
    + o\\
    =\alpha_0^1(\alpha_0^2)^2\left(\cosh(\theta_0)
      \frac{\partial}{\partial\theta_3} +
      e^{\theta_0}\left(\s{j}\frac{\partial}{\partial\theta_{j+3}}\right)\right)
    + o.
  \end{multline*}

  The induction step also uses property~\eqref{eq:7} of the quantities~$\s{j}$
  and is accomplished by the computations quite similar to those given above.
\end{proof}

Let us complete the proof of Proposition~\ref{prop1}.

\begin{proof}[Proof of Proposition~\ref{prop1} (Part II)]
  We prove by induction on~$l$ that system~\eqref{eq:6} possesses constant
  solutions only.

  For~$l=1$, the equations are
  \begin{align*}
    &D_x(f)+\alpha_0^{21}\frac{iu_y}{v}\frac{\partial f}{\partial\theta_0} +
      \alpha_0^1\cosh(\theta_0)u\frac{\partial f}{\partial\theta_1}=0,\\
    &D_y(f)-\alpha_0^{12}\frac{iv_x}{u}\frac{\partial f}{\partial\theta_0} -
      \alpha_0^2i\sinh(\theta_0)v\frac{\partial f}{\partial\theta_1}=0,
  \end{align*}
  where~$f$ is a function on~$\mathcal{E}_1$, i.e., it may depend on jet
  variables as well as on~$\theta_0$ and~$\theta_1$. Analyzing the
  coefficients of the fields~$D_x$ and~$D_y$, one immediately sees that~$f$
  can depend on~$x$, $y$, $\theta_0$, and~$\theta_1$ only and from the above
  equations it readily follows that~$f=\const$.

  Assume now that the result is valid for some~$l>1$ and
  let~$f\in C^\infty(\mathcal{E}_{l+1})$ be such
  that~$D_x^{[l+1]}(f)=D_y^{[l+1]}(f)=0$. Then, by~\eqref{eq:8},
  \begin{equation*}
    TX_{l+1}(f)=0,\qquad TY_{l+1}(f)=0.
  \end{equation*}
  Consequently,~$Z_{l+1}(f)=0$. Using Lemma~\ref{lemm1}, we see that
  \begin{equation*}
    \frac{\partial f}{\partial \theta_{l+1}}=0.
  \end{equation*}
  Thus,~$f$ is a function on~$\mathcal{E}_l$ and it is constant by the
  induction hypothesis.
\end{proof}

\begin{corollary}
  The conservation law~$\omega_s$ given by~\eqref{eq:9} is nontrivial
  on~$\mathcal{E}_s$.
\end{corollary}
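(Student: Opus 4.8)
The plan is to obtain the Corollary as an immediate consequence of Proposition~\ref{prop1}. First recall what is at stake. By construction of the tower of coverings, the coefficients of~$\omega_s$ in~\eqref{eq:9} lie in~$C^\infty(\mathcal{E}_{s-1})$, and~$\omega_s$ is closed on~$\mathcal{E}_{s-1}$, i.e.\
\begin{equation*}
  D_y^{[s-1]}\Bigl(\frac{iu_y}{v}X_s^{21}+uX_s^1\Bigr)
  = -\,D_x^{[s-1]}\Bigl(\frac{iv_x}{u}Y_s^{12}+ivY_s^2\Bigr);
\end{equation*}
this compatibility is inherited order by order from the compatibility of the Riccati covering~\eqref{eq:5}, which in turn is equivalent to~\eqref{eq:3} being a zero-curvature representation of~\eqref{sys-eta} and hence holds on~$\mathcal{E}$. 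Thus~$\omega_s$ is a well-defined conservation law on~$\mathcal{E}_{s-1}$, and it is precisely this conservation law that is used to adjoin the nonlocal variable~$\theta_s$, via
\begin{equation*}
  D_x^{[s]}(\theta_s)=\frac{iu_y}{v}X_s^{21}+uX_s^1,\qquad
  D_y^{[s]}(\theta_s)=-\Bigl(\frac{iv_x}{u}Y_s^{12}+ivY_s^2\Bigr).
\end{equation*}
Note that on~$\mathcal{E}_s$ itself one has~$\omega_s=\d\theta_s$, so~$\omega_s$ is exact there; hence the sharp (and intended) assertion is nontriviality on~$\mathcal{E}_{s-1}$, and this is what we establish.

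Next I would argue by contradiction. Suppose~$\omega_s$ is trivial on~$\mathcal{E}_{s-1}$, so that~$\omega_s=\d f$ for some~$f\in C^\infty(\mathcal{E}_{s-1})$, i.e.\ $D_x^{[s-1]}(f)=\tfrac{iu_y}{v}X_s^{21}+uX_s^1$ and~$D_y^{[s-1]}(f)=-\bigl(\tfrac{iv_x}{u}Y_s^{12}+ivY_s^2\bigr)$. Put~$h:=\theta_s-f\in C^\infty(\mathcal{E}_s)$. Since~$f$ does not depend on~$\theta_s$, the explicit form of the total derivatives~$D_x^{[s]}$, $D_y^{[s]}$ (given in Part~I of the proof of Proposition~\ref{prop1}) yields~$D_x^{[s]}(f)=D_x^{[s-1]}(f)$ and~$D_y^{[s]}(f)=D_y^{[s-1]}(f)$, whence~$D_x^{[s]}(h)=0$ and~$D_y^{[s]}(h)=0$. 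By Proposition~\ref{prop1} applied with~$l=s$, the function~$h$ must be a constant. But~$\partial h/\partial\theta_s=1\neq 0$, so~$h$ cannot be constant --- a contradiction. Therefore~$\omega_s$ is nontrivial on~$\mathcal{E}_{s-1}$, which is the assertion of the Corollary.

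Since the entire difficulty has already been absorbed into Lemma~\ref{lemm1} and Proposition~\ref{prop1}, there is essentially no remaining obstacle here. The only points requiring (routine) care are the observation that~$D_x^{[s]}$ and~$D_y^{[s]}$ restrict to~$D_x^{[s-1]}$ and~$D_y^{[s-1]}$ on functions independent of~$\theta_s$, and the translation of ``the conservation law~$\omega_s$ is trivial on~$\mathcal{E}_{s-1}$'' into ``there exists~$f\in C^\infty(\mathcal{E}_{s-1})$ with~$\d f=\omega_s$'', after which Proposition~\ref{prop1} closes the argument at once.
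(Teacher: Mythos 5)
Your proof is correct and follows essentially the same route as the paper's: assume $\omega_s=\d_h f$, observe that the nonlocal potential minus $f$ is then a joint first integral of the total derivatives on the next covering in the tower, and invoke Proposition~\ref{prop1} to conclude it is constant, contradicting its manifest dependence on the new nonlocal variable. You have also correctly spotted and repaired the off-by-one slip in the statement: since $\omega_s=\d_h\theta_s$ on $\mathcal{E}_s$ by construction, the meaningful claim is nontriviality on $\mathcal{E}_{s-1}$ (as announced in the text preceding Proposition~\ref{prop1}), and your argument with $h=\theta_s-f$ and Proposition~\ref{prop1} for $l=s$ is the correctly indexed version of the paper's own proof, which works with $\theta_{s+1}$ and $D^{[s+1]}$.
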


\begin{proof}
  Indeed, if~$\omega_s=\d_h f$,
  where~$\d_h=\d x\wedge D_x^{[s]}+\d y\wedge D_y^{[s]}$ is the horizontal
  de~Rham differential on~$\mathcal{E}_s$ then~$D_x^{[s+1]}(f-\theta_{s+1})=0$
  and~$D_x^{[s+1]}(f-\theta_{s+1})=0$, which contradicts to
  Proposition~\ref{prop1}.
\end{proof}

Thus we have constructed an infinite hierarchy of nonlocal conservation laws
$\omega_s$, $s\in\mathbb{N}$, for (\ref{sys-eta}).

\section{Recursion operator and Hamiltonian structures}\label{sec-ro}
It is readily checked that~\eqref{sys-eta} possesses an Abelian covering
$\mathcal{S}$ with the fiber coordinates $s_i$ defined by the formulas
\begin{align*}
  (s_0)_x&=\frac{u_y}{v},
  &&(s_0)_y=-\frac{v_x}{u}, \\ (s_1)_x&=\cos(s_0)u,
  &&(s_1)_y=v \sin(s_0), \\ (s_2)_x&=-\sin(s_0)u, &&(s_2)_y=v \cos(s_0).
\end{align*}
Note that the conservation laws associated with $s_i$, $i=1,2$, are potential
conservation laws in the terminology of \cite{kp} and that we have
$\theta_0=i s_0$ and $\theta_1=\alpha_0^1 s_1$ modulo the addition of
arbitrary constants.

The following result is readily verified by straightforward computation.

\begin{proposition}\label{pr-ro}
  Suppose that $U$,$V$ and $S_i$ are fiber coordinates of the tangent covering
  $V\mathcal{S}$. 

  Then the tangent covering over (\ref{sys-eta}) admits a B\"acklund
  auto-transformation (i.e., a recursion operator for (\ref{sys-eta})) of the
  form
  \begin{equation}\label{ro}
    \begin{array}{ll}
      \tilde{U}&=\eta_1 U+\left(\ds\frac12\eta'_1\cos(s_0)
                 +\frac{\sin(s_0)(\eta_2-\eta_1)}{v}u_y\right)S_1\\
               &+\left(\ds-\frac12\eta'_1\sin(s_0)+
                 \frac{\cos(s_0)(\eta_2-\eta_1)}{v}u_y\right)S_2,\\[5mm]
      \tilde{V}&=\eta_2
                 V+\left(\ds\frac12\eta'_2\sin(s_0)-
                 \frac{\cos(s_0)(\eta_2-\eta_1)}{u}v_x\right)S_1\\
               &+\left(\ds\frac12\eta'_2\cos(s_0)+
                 \frac{\sin(s_0)(\eta_2-\eta_1)}{u}v_x\right)S_2.
    \end{array}
  \end{equation}
\end{proposition}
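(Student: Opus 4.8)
The plan is to verify directly that the map $(U,V,S_0,S_1,S_2)\mapsto(\tilde U,\tilde V,\tilde S_0,\tilde S_1,\tilde S_2)$ — where $\tilde U,\tilde V$ are given by~\eqref{ro} and $\tilde S_i$ are obtained by integrating the lifted equations on the covering $V\mathcal{S}$ — is a B\"acklund auto-transformation of the tangent covering $V\mathcal{E}$ of~\eqref{sys-eta}. Concretely, the tangent covering is governed by the linearization (variational derivative) of~\eqref{sys-eta}: $U$, $V$ satisfy the system obtained by applying $\ell_{\mathcal E}$ to~\eqref{sys-eta}, and $S_0,S_1,S_2$ satisfy the linearizations of the defining relations for $s_0,s_1,s_2$ on $\mathcal S$, namely
\begin{align*}
  (S_0)_x&=\frac{U_y}{v}-\frac{u_y}{v^2}V,& (S_0)_y&=-\frac{V_x}{u}+\frac{v_x}{u^2}U,\\
  (S_1)_x&=\cos(s_0)\,U-\sin(s_0)\,u\,S_0,& (S_1)_y&=\sin(s_0)\,V+\cos(s_0)\,v\,S_0,\\
  (S_2)_x&=-\sin(s_0)\,U-\cos(s_0)\,u\,S_0,& (S_2)_y&=\cos(s_0)\,V-\sin(s_0)\,v\,S_0.
\end{align*}
One must show that $\tilde U,\tilde V$ defined by~\eqref{ro} again solve the linearized system~\eqref{sys-eta}, once a compatible triple $\tilde S_0,\tilde S_1,\tilde S_2$ is adjoined via the same defining relations with $\tilde U,\tilde V$ in place of $U,V$.

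The steps I would carry out, in order. First, fix the off-diagonal data: set $\tilde S_0=U$ (this is the natural choice since, comparing the $s_0$-relations with the $x$- and $y$-parts of~\eqref{sys-eta}, $u_y/v$ and $-v_x/u$ are exactly the ``potentials'' whose linearizations reproduce the $U$-equation; more precisely $\tilde S_0$ should be chosen so that $(\tilde S_0)_x=\tilde U_y/v-(u_y/v^2)\tilde V$ holds — I would identify the correct algebraic expression for $\tilde S_0$ in terms of $U,V,S_i$ by differentiating $\tilde U$ from~\eqref{ro} and matching). Second, compute $D_x\tilde U$ and $D_y\tilde U$ using the prolonged total derivatives on $V\mathcal S$ (so $D_x$ acts on $u,v$ via~\eqref{sys-eta}, on $s_0,s_1,s_2$ via their defining relations, and on $U,V,S_i$ via the linearized relations above), and likewise for $\tilde V$. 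Third, substitute everywhere the second-order relations~\eqref{sys-eta} and their linearizations to eliminate $u_{yy},v_{xx},U_{yy},V_{xx}$, and check that the resulting expressions for $D_yD_y\tilde U$ and $D_xD_x\tilde V$ coincide with what the linearized~\eqref{sys-eta} predicts for $\tilde U,\tilde V$. Equivalently — and this is the cleaner route — check that $\mathrm{d}\tilde\omega=0$ on $V\mathcal S$ for the two one-forms whose $\mathrm{d}$-closedness encodes the new conservation laws $\tilde S_1,\tilde S_2$, i.e. verify the cross-derivative (``curl'') conditions
\[
  D_y\bigl(\cos(s_0)\tilde U-\sin(s_0)u\,\tilde S_0\bigr)=D_x\bigl(\sin(s_0)\tilde V+\cos(s_0)v\,\tilde S_0\bigr)
\]
and its $S_2$-counterpart, together with the analogous condition for $\tilde S_0$ itself. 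Finally, invoke that the construction is invertible (one solves~\eqref{ro} for $U,V$ in terms of $\tilde U,\tilde V,S_1,S_2$ since the coefficient of $U$ is $\eta_1$, that of $V$ is $\eta_2$, and $\eta_1\neq\eta_2$ generically — note also $\mathrm{d}(s_0)/\mathrm{d}$ is recoverable), so it is a genuine auto-transformation and hence a recursion operator.

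The main obstacle is purely computational bookkeeping: one is differentiating expressions containing $\cos s_0,\sin s_0$, the nonlocal-free combinations $u_y/v$, $v_x/u$ and their $x$- and $y$-derivatives, and after imposing~\eqref{sys-eta} a large number of terms must cancel in pairs. The delicate points are (i) getting the sign conventions and the factor $\tfrac12$ in the $\eta_j'$-terms exactly right so that the derivative-of-$\eta_1(x)$ and derivative-of-$\eta_2(y)$ contributions match the corresponding terms produced when $D_y^2$ or $D_x^2$ hits the explicit $x$- or $y$-dependence through $\eta_1,\eta_2$, and (ii) correctly propagating the nonlocal variable $s_0$, whose $x$- and $y$-derivatives are themselves nonlocal data, through the second differentiation. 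Since, as the excerpt already states, the proposition ``is readily verified by straightforward computation,'' I expect no conceptual difficulty beyond organizing this cancellation — in practice I would perform it in a computer algebra system, or else present it schematically, grouping terms by their trigonometric factor ($\cos s_0$, $\sin s_0$, $\cos^2 s_0$, etc.) and checking each group vanishes on $V\mathcal{S}$ modulo~\eqref{sys-eta} and its linearization.
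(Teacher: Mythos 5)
Your proposal is correct and takes essentially the same route as the paper: the paper gives no written argument beyond the remark that the result ``is readily verified by straightforward computation,'' and your plan --- linearize the defining relations of $s_0,s_1,s_2$ to get the equations on $S_i$ in $V\mathcal{S}$, prolong the total derivatives accordingly, and check that $\tilde U,\tilde V$ from \eqref{ro} satisfy the linearization of \eqref{sys-eta} --- is precisely that computation made explicit. (The construction of $\tilde S_i$ and the invertibility remark are not needed for the proposition itself, only for iterating the recursion, but they do no harm.)
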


Equations (\ref{ro}) define a recursion operator for (\ref{sys}) in the
following fashion (see e.g.\ \cite{marvan, ms, ms2, asro} and references
therein for details).

Suppose $(U,V)^T$ is a symmetry shadow for (\ref{sys}) in a covering
$\mathcal{C}$ over $\mathcal{S}$ (here and below the superscript $T$
indicates the transposed matrix). Then we have a (possibly trivial)
covering $\mathcal{C}'$ over $\mathcal{C}$ arising from substituting
our $U$ and $V$ into the equations defining $S_i$. Under these
assumptions (\ref{ro}) defines a new symmetry shadow
$(\tilde{U},\tilde{V})^T$ for (\ref{sys}) in $\mathcal{C}'$, i.e.,
we have a recursion operator for (\ref{sys}).

Note that (\ref{ro}) was found using the method from \cite{ms}; cf.\ e.g.\
\cite{marvan, kvv, m2, mo, asro} for some other related techniques.

Starting with a simple seed symmetry like $(0,0)^T$ yields, through
the repeated application of (\ref{ro}), an infinite hierarchy of
shadows of nonlocal symmetries for (\ref{sys-eta}).  It is an
interesting open problem to find the minimal covering in which it is
possible to lift all these shadows to full-fledged nonlocal
symmetries of (\ref{sys-eta}) and to find the commutation relations
among these nonlocal symmetries, cf.\ e.g.\ \cite{mos}.

Moreover, one can readily establish the following
result:
\begin{proposition}
  \label{prop2}
  System (\ref{sys-eta}) admits (in a generalized sense of \cite{kkv,
    kvv,k-v}) a pair of compatible local Hamiltonian structures
  $\mathcal{P}_i$ of the form
  \[
  \mathcal{P}_i=f_{i,3} D_x D_y + D_y\circ f_{i,2} +  D_x\circ
  f_{i,1}+f_{i,0},\quad i=1,2,
  \]
  where $f_{i,j}$ are $2\times 2$ matrices of the form
  \[
  f_{1,3}=\left(\begin{array}{cc}1
                  & 0\\0
                  & -1\end{array}\right),\quad f_{2,3}
              =\left(\begin{array}{cc}\eta_1
                       & 0\\0
                       & -\eta_2\end{array}\right)
   \]
   \[
   f_{1,2}=\left(\begin{array}{cc}0
                   & 0\\[5mm]
                   \displaystyle\frac{v_x}{u}+\frac{v\eta'_1}{2 u\Delta}
                  &\displaystyle-\frac{u_x}{u}-\frac{\eta'_1}{2
                    \Delta} \end{array}\right),\]

  \[
  f_{2,2}=\left(\begin{array}{cc}\displaystyle\frac{\eta'_1}{2}
                  & 0\\[5mm]
                  \displaystyle\frac{\eta_1 v_x}{u}+\frac{v\eta'_1\eta_2}{2
                  u\Delta}

                  & \displaystyle-\frac{\eta_2
                             u_x}{u}-\frac{\eta'_1\eta_2}{2
                             \Delta} \end{array}\right),
   \]

   \[
   f_{1,1}=\left(\begin{array}{cc}\displaystyle
                   \frac{v_y}{v}-\frac{\eta'_2}{2\Delta}
                   &
                     \displaystyle-\frac{u_y}{v}+\frac{\eta'_2
                     u}{2v\Delta}\\[5mm] 0
                   & 0\end{array}\right),\]

  \[
  f_{2,1}=\left(\begin{array}{cc}\displaystyle \frac{\eta_1 v_y}{v}-\frac{\eta_1
                  \eta'_2}{2\Delta}
                  & \displaystyle-\frac{\eta_2 u_y}{v}+\frac{\eta_1
                    \eta'_2 u}{2v\Delta}\\[5mm] 0
                  &
                    \displaystyle-\frac{\eta'_2}{2}\end{array}\right)
  \]
  \[
  f_{1,0}=\left(\begin{array}{cc}\displaystyle\frac{\eta'_1 u_y}{2 u \Delta}
                  &
                    \displaystyle \frac{u_{xy}}{v}-\frac{u_x u_y}{v u}-\frac{v_x
                    u_y}{v^2}-\frac{\eta'_1 u_y}{2 v\Delta}\\[5mm]
                  \displaystyle-\frac{v_{xy}}{u}+\frac{v_x v_y}{v u}+\frac{v_x
                  u_y}{u^2}-\frac{\eta'_2 v_x}{2 u \Delta}
                  &\displaystyle\frac{\eta'_2
                  v_x}{2 v\Delta}
                \end{array}\right)
  \]
  \medskip
  \[
  f_{2,0}=\left(\begin{array}{lr}\displaystyle-\frac{\eta'_1 v_y}{2 v}
                  +\frac{\eta'_1 \eta_2 u_y}{2 u\Delta}
                +\frac{\eta'_1\eta'_2}{4\Delta}
                  & (f_{2,0})_{12}\\[5mm](f_{2,0})_{21}
                  &\displaystyle
                    \frac{\eta'_2 u_x}{2 u}+\frac{\eta_1 \eta'_2 v_x}{2 v\Delta}
                    +\frac{\eta'_1\eta'_2}{4\Delta}
              \end{array}\right),
  \]
  \medskip
  \[
  (f_{2,0})_{12}=\displaystyle \frac{\eta_2 u_{xy}}{v}
  -\frac{\eta_2 u_x u_y}{v u}+\frac{\eta_2 v_x u_y}{v^2}-\frac{\eta'_1 \eta_2
    u_y}{2 v \Delta}-\frac{\eta'_1\eta'_2 u}{4v\Delta},
  \]
  \[
  (f_{2,0})_{21}=\displaystyle-\frac{\eta_1 v_{xy}}{u}
  +\frac{\eta_1 v_x v_y}{v u}+\frac{\eta_1 v_x u_y}{u^2}-\frac{\eta_1 \eta'_2
    v_x}{2 u \Delta}-\frac{\eta'_1\eta'_2 v}{4u\Delta},
  \]
  and $\Delta=\eta_2-\eta_1$.
\end{proposition}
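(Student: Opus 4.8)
The plan is to verify directly that each $\mathcal{P}_i$ is a Hamiltonian operator for (\ref{sys-eta}) and that the two are compatible, working in the cotangent-covering (cosymmetry) formalism of \cite{kkv,kvv,k-v} that is appropriate for systems which, like (\ref{sys-eta}), are not written in evolutionary form. Write the system as $u_{yy}-F=0$, $v_{xx}-G=0$, with $F$ and $G$ the right-hand sides of (\ref{sys-eta}); its universal linearization $\ell_{\mathcal{E}}$ is then the $2\times 2$ matrix operator whose diagonal leading terms are $D_y^2$ and $D_x^2$ and whose lower-order part is read off from $F$ and $G$, and $\ell_{\mathcal{E}}^{*}$ is its formal adjoint. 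There are three conditions to check for a single $\mathcal{P}_i$: (a) skew-adjointness, $\mathcal{P}_i^{*}=-\mathcal{P}_i$; (b) that $\mathcal{P}_i$ maps cosymmetries of (\ref{sys-eta}) to symmetries, i.e.\ there is an operator identity $\ell_{\mathcal{E}}\circ\mathcal{P}_i=\mathcal{Q}_i\circ\ell_{\mathcal{E}}^{*}$ for a suitable matrix operator $\mathcal{Q}_i$ (equivalently, $\ell_{\mathcal{E}}\circ\mathcal{P}_i$ vanishes on the cotangent covering $\mathcal{E}^{*}$); and (c) the Jacobi identity, i.e.\ vanishing of the Schouten bracket $[\mathcal{P}_i,\mathcal{P}_i]$.

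For (a) I would compute the adjoint of $f_{i,3}D_xD_y+D_y\circ f_{i,2}+D_x\circ f_{i,1}+f_{i,0}$ term by term, using $\eta_1=\eta_1(x)$ and $\eta_2=\eta_2(y)$, and collect the resulting algebraic and first-order differential constraints on the $f_{i,j}$. They hold for the displayed matrices: $f_{i,3}$ is symmetric, so $f_{i,3}D_xD_y$ is self-adjoint up to the first-order discrepancy $f_{i,3}D_xD_y-D_xD_y\circ f_{i,3}$, which is cancelled exactly by the diagonal parts of $f_{i,2}$ and $f_{i,1}$ (for $i=2$ these are the $\tfrac12\eta_1'$ and $-\tfrac12\eta_2'$ entries); the strictly triangular structure of $f_{i,2}$ and $f_{i,1}$ makes $D_y\circ f_{i,2}$ pair correctly with $-f_{i,1}^{T}\circ D_x$; and the remaining zeroth-order terms match after one integration by parts, with $f_{i,0}$ supplying its own skew part. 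For (b) I would restrict the operator identity to $\mathcal{E}$, substituting $u_{yy}$, $v_{xx}$ and their differential consequences; collecting first the coefficients of the highest mixed derivatives $D_x^2D_y$, $D_xD_y^2$, \dots\ successively reduces $\ell_{\mathcal{E}}\circ\mathcal{P}_i$ to a left multiple of $\ell_{\mathcal{E}}^{*}$. Both (a) and (b) are finite and essentially bookkeeping, the only standing assumption being $\Delta=\eta_2-\eta_1\neq 0$ already built into (\ref{sys-eta}).

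For (c), and for compatibility, I would pass to the bivector picture: introduce odd fiber variables $p=(p_1,p_2)^{T}$ on $\mathcal{E}^{*}$, form the $2$-vector associated with $\mathcal{P}_i$, and check that its Schouten square vanishes; by the standard criterion this reduces to showing that a certain trilinear form built from $\mathcal{P}_i$ and its linearization (in the dependent variables) is totally symmetric along $\mathcal{E}^{*}$. Since (a) and (b) are linear in the operator, they pass to the pencil $\mathcal{P}_1+\lambda\mathcal{P}_2$ automatically; for (c) applied to the pencil, only the mixed term — symmetry of the bilinear-in-$\mathcal{P}$ expression $\int\langle\varphi_1,\,\ell_{\mathcal{P}_1(\varphi_2)}(\mathcal{P}_2(\varphi_3))+\ell_{\mathcal{P}_2(\varphi_2)}(\mathcal{P}_1(\varphi_3))\rangle$ over cosymmetries $\varphi_j$ — remains to be verified, one further computation of the same type. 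As a consistency check one can also confirm that $\mathcal{P}_2$ arises from $\mathcal{P}_1$ by composition with (a suitable form of) the recursion operator (\ref{ro}), which is how the pair was found, but this is not needed for the proof. The main obstacle is purely computational: the symmetry/Schouten-bracket verification in step (c) and its mixed analogue for the pencil involve a sizeable symbolic calculation with the first- and second-order jets of $u$ and $v$ reduced modulo (\ref{sys-eta}), and are most safely carried out with computer algebra; everything else is routine manipulation of the explicit matrices $f_{i,j}$.
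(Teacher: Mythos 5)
Your overall plan --- verify skew-symmetry, the bivector property, the Jacobi identity and the mixed Schouten bracket for the pencil, delegating the heavy symbolic work to computer algebra --- is the same in spirit as the paper's, and your remark that only the mixed bracket needs a separate check for the pencil is exactly the paper's condition \eqref{eq:12}. But your step (a) contains a genuine error. The condition you state, $\mathcal{P}_i^{*}=-\mathcal{P}_i$ as a naive operator identity, cannot hold for the displayed operators: the leading term $f_{i,3}D_xD_y$ with $f_{i,3}$ symmetric (here even diagonal and constant) is formally \emph{self}-adjoint, since $(f_{i,3}D_xD_y)^{*}=D_xD_y\circ f_{i,3}^{T}=f_{i,3}D_xD_y+(\text{lower order})$, so $\mathcal{P}_i+\mathcal{P}_i^{*}$ has the nonzero second-order part $2f_{i,3}D_xD_y$, which no choice of the first- and zeroth-order coefficients can cancel. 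Your own description of the cancellation mechanism is internally inconsistent on this point: you call the discrepancy ``first-order'' when it is second-order. The correct notion of skew-symmetry in the generalized setting of \cite{kkv,kvv,l-star} is skew-symmetry of the associated variational \emph{bivector} $W_{\mathcal{P}_i}$, i.e.\ identity \eqref{eq:10}, which holds only modulo total divergences and modulo the relations of the cotangent covering (note, e.g., that $p^1p^1_{xy}$ is itself a total divergence for odd $p$'s, so the diagonal leading term contributes nothing to the bivector). This is weaker than operator skew-adjointness and is what actually needs to be, and is, verified.

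A second, smaller gap: before any of (a)--(c) makes sense one must know that the cotangent covering of \eqref{sys-eta} is well defined and that Schouten brackets of bivectors can be computed unambiguously; this requires the system to be \emph{normal}. The paper secures this by exhibiting a contact equivalence of \eqref{sys-eta} with an evolutionary system, and then carries out the entire verification of \eqref{eq:10}--\eqref{eq:12} in that evolutionary representation precisely to avoid the subtleties above. You tacitly assume the framework of \cite{kkv,kvv,k-v} applies to the non-evolutionary form without addressing normality. If you repair step (a) by replacing operator skew-adjointness with bivector skew-symmetry, and add the normality argument, your route (working directly with the second-order form and the identity $\ell_{\mathcal{E}}\circ\mathcal{P}_i=\mathcal{Q}_i\circ\ell_{\mathcal{E}}^{*}$) becomes a legitimate alternative to the paper's detour through the evolutionary representation, at the price of a more delicate bookkeeping of terms that vanish only modulo $\ell_{\mathcal{E}}$ and $\ell_{\mathcal{E}}^{*}$.
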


\begin{proof}
Our system (\ref{sys-eta}) is readily shown to be contact equivalent
to an evolutionary one (cf.\ Section~\ref{sec-ho} below).
Consequently, system (\ref{sys-eta}) is normal and its cotangent
covering (the $\ell^*$-covering) is well-defined, cf.\ \cite{kvv},
and the Schouten brackets of variational bivectors can be computed
in any representation of the system under study. Let us employ to
this end an evolutionary representation.\looseness=-1

Namely, let $\mathcal{E}$ denote now an evolutionary representation
for our system. Let $u_\sigma^j$ be the jet variables on
$\mathcal{E}$ and $p_\sigma^j$ be the associated variables in the
fiber of the cotangent covering of $\mathcal{E}$. Then to any
differential operator $\Delta=\left\Vert\sum_\sigma
a_\sigma^{ij}D_\sigma\right\Vert$ in total derivatives acting from
the space of symmetries of $\mathcal{E}$ to that of cosymmetries of
$\mathcal{E}$ there corresponds the function
  \begin{equation}\label{w-delta}
    W_\Delta=\sum_{i,j,\sigma}a_\sigma^{ij}p_\sigma^j p^i
  \end{equation}
on the space of the cotangent covering, see~\cite{l-star}.

The operator $\Delta$ is skew-adjoint (and to the function
$W_\Delta$ we can associate a variational bivector, cf.\
\cite{l-star}) if and only if \cite{l-star}
  \begin{equation}
    \label{eq:10}
    \sum_j\frac{\delta W_\Delta}{\delta p^j}p^j=-2W_\Delta,
  \end{equation}
while a variational bivector is a Poisson structure (and the
corresponding operator is Hamiltonian) if and only if \cite{l-star}
\begin{equation}
    \label{eq:11}
    \delta\sum\frac{\delta W}{\delta u^j}\frac{\delta W}{\delta p^j}=0,
  \end{equation}
where~$\delta$ is the Euler operator.

Finally, two Poisson structures are compatible if and only if the
associated bivectors satisfy \cite{l-star}
  \begin{equation}
    \label{eq:12}
    \delta\sum_j\left(\frac{\delta W_1}{\delta u^j}\frac{\delta W_2}{\delta
        p^j} + \frac{\delta W_2}{\delta u^j}\frac{\delta W_1}{\delta
        p^j}\right)=0.
  \end{equation}

Identities~\eqref{eq:10}--\eqref{eq:12}, although quite complicated,
are easily checked for the quantities $W_i$ associated to the
counterparts of $\mathcal{P}_i$ for $\mathcal{E}$ by an appropriate
computer algebra software.
\end{proof}

The ratio of the Hamiltonian structures
$\mathcal{R}=\mathcal{P}_2\circ \mathcal{P}_1^{-1}$ is nothing but
the recursion operator (\ref{ro}) written in the pseudodifferential
form, cf.\ e.g.\ \cite{blaszak, marvan, m, ms, asbih} and references
therein. The compatibility of $\mathcal{P}_1$ and $\mathcal{P}_2$
implies that $\mathcal{R}$ is hereditary, cf.\ e.g.\ \cite{blaszak,
k} and references therein for details.

\section{Compatible Hamiltonian structures in evolutionary
  representation}\label{sec-ho}
In order to make contact with the standard theory of Hamiltonian structures,
see e.g.\ \cite{blaszak} and references therein, we should rewrite
(\ref{sys-eta}) in evolutionary form and find the counterparts of
$\mathcal{P}_i$ for this new form.

First of all, in order to simplify further computations note \cite{evfpc} that
in general position we can without loss of generality assume that
$\eta_i=x^i$.

To see this, introduce new independent variables $\tilde x^i=\eta_i(x^i)$, and
let $h_i$ stand for the inverse functions, i.e., $x^i=h_i(\tilde x^i)$, and
new dependent variables $\tilde H_i$ such that $H_i=\eta'_i \tilde H_i$ (no
sum over $i$). Next, for $i\neq j$ put
$\tilde\beta_{ij}=\tilde\partial_i\tilde H_j/\tilde H_i$, where
$\tilde\partial_i=\partial/\partial\tilde x^i$.

We find that $\tilde H_i=\tilde H_i(\tilde x^1,\tilde x^2)$ satisfy
\begin{equation*}
  \begin{array}{c}
    \tilde\partial_1\tilde H_2=\tilde\beta_{12}\tilde H_1,
    ~~~ \tilde\partial_2\tilde H_1=\tilde\beta_{21}\tilde H_2, \\
    \ \\
    \tilde\partial_1\tilde\beta_{12}+\tilde\partial_2\tilde\beta_{21}=0, \\
    \ \\
    \tilde x^1 \tilde\partial_1\tilde\beta_{12}+\tilde x^2
    \tilde\partial_2\tilde\beta_{21}+\frac{1}{2}\tilde\beta_{12}
    +\frac{1}{2}\tilde\beta_{21}+\tilde H_1 \tilde H_2=0.
  \end{array}
  \label{fsa}
\end{equation*}

The above system is, modulo the tildes, nothing but a special case of
(\ref{eq:2}) with $\eta_i=x^i$. Thus, in what follows we shall assume without
loss of generality that $\eta_1=x^1$ and $\eta_2=x^2$.

With this assumption in mind, we shall work below with the system
\begin{equation}\label{sys}
  \begin{array}{rcl}
    u_{yy} &=& \displaystyle\frac{\frac{1}{2}v^2 v_x+\frac{1}{2}u v
               u_y +(x-y) u u_y v_y+u^2 v^3} {u v (x-y)},\\
    v_{xx} &=& \displaystyle\frac{\frac{1}{2}u v v_x +\frac{1}{2}u^2 u_y+ (y-x)
               v v_x u_x+u^3 v^2}{u v(y-x)}.
  \end{array}
\end{equation}
instead of (\ref{sys-eta}).

In order to study the Hamiltonian structures admitted by (\ref{sys}), perform
the following change of variables: put $z=(x+y)/2$ and
$t=(x-y)/2$. Then~\eqref{sys} can be rewritten in an evolutionary form:
\begin{equation}\label{ev}
  \begin{array}{ll}
    u_t&=p,\\
    v_t&=q,\\
    p_t&=
         \ds-u_{zz}+ 2 p_z + \frac{u_z v_z}{v} - \frac{(v + 2 z q) u_z}{2 v z} -
         \frac{(2 z u p - v^2) v_z}{2 u v z} \\[2mm]
       &+ \dfrac{2 u^2 v^3 + u v p + 2 z u p q +  v^2 q}{2 u v z},\\[5mm]
    q_t&=
         \ds-v_{zz} - 2 q_z+ \frac{u_z v_z}{u} + \frac{(u^2 + 2 z v q) u_z}{2
         u v z} -
         \frac{(u - 2 z p) v_z}{2 z u}  \\[2mm]
       &- \dfrac{2 u^3 v^2 + u^2 p + u v q - 2  z v p q}{2 u v z}.
  \end{array}
\end{equation}
Introduce $4\times 4$ matrices $g_i$ of the form
\[
\ds
g_3=\left(\begin{array}{@{}cccc@{}} 0
            & 0 & 0 & 0 \\ 0
            & 0 & 0 & 0 \\ 0 &
                               0 &
                                   2(z + t) & 0 \\
            0 & 0 & 0 & 2(t-z) \end{array}\right),\]
\medskip
\[
g_2=\left(\begin{array}{@{}cccc@{}} 0 & 0 & 2(z + t) & 0
            \\ 0 & 0 & 0 &
                           2(z - t) \\ -2(z+t) & 0 & \ds 3 &
                                                             (g_2)_{34}\\
            0 & 2(t-z) &
                         -(g_2)_{34} &
                                       -3 \end{array}\right),
\]
where
\[
(g_2)_{34}=-\frac{(z - t) u_z}{v} - \frac{(z + t) v_z}{u} - \frac{(z + t)
  q}{u} + \frac{(z - t) p}{v},
\]
\[
\ds
g_1=\left(\begin{array}{@{}cccc@{}} -2(z + t)
            & 0 & 1 &
                      (g_1)_{14}\\
            0 & 2(z - t)
                &
                  -(g_1)_{14}&
                               1 \\ -3 &
                                         -(g_1)_{14} &
                                                       (g_1)_{33} &
                                                                    (g_1)_{34}\\
            (g_1)_{14}&
                        -3&
                            (g_1)_{43} &
                                         (g_1)_{44}
          \end{array}\right)
\]
where
\begin{align*}
  (g_1)_{14}&=\ds\frac{(z - t) u_z}{v} - \frac{(z + t) v_z}{u} - \frac{(z + t)
              q}{u} - \frac{(z - t) p}{v},\\[5mm]
  (g_1)_{33}&=\ds2(z + t) \left(-\frac{u_z v_z}{u v} - \frac{q u_z}{u v} +
              \frac{p v_z}{u v} + 2 \frac{v^2}{z} + \frac{p q}{u
              v}\right),\\[5mm]
  (g_1)_{34}&=\ds\frac{(z- t) u_{zz}}{v} - 3 \frac{(z + t) v_{zz}}{u} +
              \frac{(z (3v^2-u^2) + t (u^2 + 3 v^2)) u_z v_z}{u^2 v^2}\\[5mm]
            &\ds+ 3 \frac{(z + t) (q u_z-u q_z)}{u^2} - \frac{((t-z) u p + 2
              v^2) v_z}{v^2 u} + \frac{(t-z) p_z}{v} - 2 \frac{u^2 v +
              q}{u},\\[5mm]
  (g_1)_{43}&=\ds3 \frac{(z - t) u_{zz}}{v} - \frac{(z + t) v_{zz}}{u} +
              \frac{(3 (t-z) u^2+ (t+z) v^2) u_z v_z}{u^2 v^2}\\[5mm]
            &\ds  + \frac{(2 u^2 + (z+t) v q) u_z}{u^2 v}+3 \frac{(z - t) (p
              v_z-v p_z)}{v^2} - \frac{(z + t) q_z}{u} - 2 \frac{u v^2 +
              p}{v}, \\[5mm]
  (g_1)_{44}&=\ds\frac{(t-z)}{(t+z)}(g_1)_{33}+4 \frac{(u^2 + v^2)(z-t)}{z},
\end{align*}
and
\[
g_0=\left(\begin{array}{@{}cccc@{}} -1
            & (g_0)_{12} & (g_0)_{13} &(g_0)_{14}\\
            -(g_0)_{12}
            & 1 &(g_0)_{23}&(g_0)_{24}\\
            -(g_0)_{13} &  (g_0)_{32} &(g_0)_{33} &(g_0)_{34}\\
            (g_0)_{41}& -(g_0)_{24} &(g_0)_{43} &(g_0)_{44}
          \end{array}\right),
\]
where
\begin{align*}
  (g_0)_{12} &= \ds\frac{(z - t) u_z}{v} + \frac{(z + t) v_z}{u} + \frac{(t-z)
               u p + (t+ z) v q}{u v},\\[5mm]
  (g_0)_{13} &= \ds-\frac{(z - t) u_z^2}{v^2} - \frac{(z + t) u_z v_z}{u v} -
               \frac{(2 (t-z) u p + (z+t) v q) u_z}{u v^2} \\[5mm]
             &\ds  + \frac{(z + t) p v_z}{u v}+\frac{(z+t) u v^4+ z (t-z) u
               p^2 + z (z+t) v p q}{v^2 u z},\\[5mm]
  (g_0)_{14}&=\ds-2 \frac{(z + t) v_{zz}}{u} + 2 \frac{(z + t) u_z v_z}{u^2} +
  2 \frac{(z + t) q u_z}{u^2} - \frac{v_z}{u}\\[5mm] & - 2 \frac{(z + t)
                                                       q_z}{u} -
  \frac{z u^2 v + t u^2 v + z q}{z u},
  \\[5mm]
  (g_0)_{23}&=\ds-2 \frac{(z - t) u_{zz}}{v} + 2 \frac{(z - t) u_z v_z}{v^2} -
  \frac{u_z}{v} - 2 \frac{(z - t) p v_z}{v^2}\\[5mm] & + 2 \frac{(z - t)
                                                       p_z}{v} +
  \frac{(z- t) u v^2 + z p}{v z},
  \\[5mm]
  (g_0)_{24}&=\ds-\frac{(z - t) u_z v_z}{u v} - \frac{(z + t) v_z^2}{u^2} -
  \frac{(z - t) q u_z}{u v} - \frac{((t-z) u p + 2 (z+t) v q) v_z}{u^2 v}
  \\[5mm]
  &-\ds \frac{(z-t) u^4 v + z (t-z) u p q + z(z+t) v q^2}{z u^2 v},
  \\[5mm]
  (g_0)_{32} &= \ds\frac{(z - t) u_{zz}}{v} + \frac{(z + t) v_{zz}}{u} -
  \frac{(z u^2 - t u^2 + z v^2 + t v^2) u_z v_z}{u^2 v^2} - \frac{(z + t) q
    u_z}{u^2}\\[5mm]
  & + \frac{((z-t) u p + v^2) v_z}{u v^2} -\frac{(z - t) p_z}{v}
  + \frac{(z + t) q_z}{u} +\frac{(t-z) u^2 v + z
    q}{z u},
  \\[5mm]
  (g_0)_{33} &=\ds-\frac{(z + t) v_z u_{zz}}{u v} - \frac{(z + t) (q+v_z) u_{zz}}{u
    v} + \frac{(z + t) p v_{zz}}{u v} +
  \frac{(z + t) u_z^2 v_z}{u^2 v} \\[5mm]
  &+ \frac{(z + t) u_z v_z^2}{u v^2} +\ds \frac{(z + t) q u_z^2}{u^2 v} +
  \frac{((z+t) (u q - v p)-u v) u_z v_z}{u^2 v^2} \\[5mm]
  & -\ds \frac{(z + t) u_z
    q_z}{u v} - \frac{(z + t) p v_z^2}{u v^2} +\frac{(z + t) v_z p_z}{u v} \ds
  - \frac{q(u + (t+z) p) u_z}{u^2 v} \\[5mm]
  &+ \frac{(4 (z+t) u v^3 + z v p - z (t+z) p q) v_z}{z u v^2} + \frac{(z + t)
    (q p_z+p q_z)}{u v}\\[5mm]
  &\ds + \frac{z^2 p q-2 t u v^3}{z^2
    u v}
  \\[5mm]
  (g_0)_{34}&=\ds -2 \frac{(z + t) v_{zzz}}{u} + 2 \frac{(z + t) (q+v_z)
    u_{zz}}{u^2}  - 3 \frac{v_{zz}}{u} + 4
  \frac{(z + t) u_z v_{zz}}{u^2}\\[5mm]
  &\ds  - 2 \frac{(z + t) q_{zz}}{u}- \frac{((t-z) u^2 + 4 (t+z) v^2) u_z^2
    (v_z+q)}{u^3 v^2} +\ds \frac{(z +
    t) u_z v_z^2}{u^2 v}\\[5mm]
  &\ds+ \frac{(2 (t-z) u p + 3 v^2 + 2 (t+z) v q) u_z v_z}{u^2 v^2} + 4
  \frac{(z + t) u_z q_z}{u^2} - \frac{(z + t) p v_z^2}{u^2 v}\\[5mm]
  &\ds+ \frac{((z+t) u^4 v -(z+t) u^2 v^3 +2 z (t-z) u p q + 3 z v^2 q + z
    (z+t) v q^2) u_z}{u^2 v^2 z}\\[5mm]
  &\ds - \frac{((t+z) u^3 v^2 + (t+z) u v^4 + z (t-z) u p^2 + 2 z (z+t) v p
    q) v_z}{u^2 v^2 z} - 3 \frac{q_z}{u}\\[5mm]
  &\ds- \frac{z (z-t) u^4 v p - 2 t u^3 v^3 + z(z+ t) u v^4 q + z^2 (t- z) u
    p^2 q+ z^2 (z+ t) v p q^2}{v^2 z^2 u^2},
  \\[5mm]
  (g_0)_{41}&=\ds (g_0)_{32}+\frac{u_z-p}{v} - \frac{v_z+q}{u} + 2 u v,\\[5mm]
  (g_0)_{43} &= \ds 2 \frac{(z - t) u_{zzz}}{v} - 4 \frac{(z - t) v_z
    u_{zz}}{v^2} + 3 \frac{u_{zz}}{v} + 2 \frac{(z - t) p v_{zz}}{v^2} - 2
  \frac{(z - t) u_z v_{zz}}{v^2}
  \\[5mm]
  &\ds - 2 \frac{(z - t) p_{zz}}{v}- \frac{(z - t) u_z^2 v_z}{u v^2} -
    \frac{(4 (t-z) u^2 + (t+z) v^2) u_z
    v_z^2}{u^2 v^3}-\ds \frac{(z - t) q u_z^2}{u v^2}  \\[5mm]
  &\ds- 2 \frac{(z + t) q u_z
    v_z}{u^2 v}+ 2 \frac{(z - t) p u_z v_z}{u v^2} - 3 \frac{u_z v_z}{v^2} +
    \frac{(4
    (t-z) u^2 + (t+z) v^2) p v_z^2}{u^2 v^3} \\[5mm]
  &+ 4 \frac{(z - t) v_z
    p_z}{v^2}- \frac{(z - t)(u^2 + v^2) u_z}{v z} \ds- \frac{(z + t) q^2
    u_z}{u^2 v} + 2
  \frac{(z - t) p q u_z}{u v^2}
  \\[5mm]
  &+\frac{((t-z) u^2 +(t+z) v^2) v_z}{z u} - \frac{(z - t) p^2 v_z}{u v^2}+ 2
  \frac{(z + t) p q v_z}{u^2 v} + 3 \frac{p v_z}{v^2}\\[5mm]
  & \ds- 3 \frac{p_z}{v} + \frac{(z - t) u^2 p}{v z} - 2 \frac{u t v}{z^2} +
  \frac{(z + t) v^2 q}{z u} - \frac{(z - t) p^2 q}{u v^2} + \frac{(z + t) p
    q^2}{u^2 v},\\[5mm]
  (g_0)_{44}&=\ds\frac{(z - t) v_z u_{zz}}{u v} + \frac{(z - t) q u_{zz}}{u
    v} - \frac{(z - t) p v_{zz}}{u v} + \frac{(z - t) u_z v_{zz}}{uv} \\[5mm]
  &- \frac{(z - t) u_z^2 v_z}{u^2 v} - \frac{(z - t) u_z v_z^2}{u v^2}\ds -
  \frac{(z - t) q u_z^2}{u^2 v} - \frac{(z - t) q u_z v_z}{u v^2} + \frac{u_z
    v_z}{u v} \\[5mm]
  &+ \frac{(z - t) p u_z v_z}{u^2 v} + \frac{(z - t) u_z q_z}{u v}
  + \frac{(z - t) p v_z^2}{u v^2} - \frac{(z - t) v_z p_z}{u v}\\[5mm]
  & \ds + \frac{u_z q}{u v} + \frac{(z - t) p q u_z}{u^2 v} + 4 \frac{(z - t)
    u u_z}{z} - \frac{p v_z}{u v} + \frac{(z - t) p q v_z}{u v^2} \\[5mm]
  &- \frac{(z -
    t) q p_z}{u v} - \frac{(z - t) p q_z}{u v} + 2 \frac{t u^2}{z^2} - \frac{p
    q}{u v}.
\end{align*}

It is readily checked that the operator
$\mathfrak{P}=\sum\limits_{i=0}^3 g_i D_z^i$ is a local Hamiltonian operator
and the flow (\ref{ev}) preserves $\mathfrak{P}$. Moreover, as $\mathfrak{P}$
is linear in $t$, the operators $\mathfrak{P}$ and $\p\mathfrak{P}/\p t$ form
a Hamiltonian pair and are both preserved by the flow (\ref{ev}), cf.\ e.g.\
\cite{asaam, as-sigma} and references therein. Clearly, $\mathfrak{P}$ is the
counterpart of $\mathcal{P}_2$ while $\p\mathfrak{P}/\p t$ is the counterpart
of $\mathcal{P}_1$ for (\ref{ev}).

As $\mathfrak{P}$ explicitly depends on $t$ while the right-hand side of
(\ref{ev}) does not, Lemma 3.8 from \cite{blaszak} implies that there exists
no functional $\mathcal{H}$ such that (\ref{ev}) would be Hamiltonian with
respect to $\mathfrak{P}$ with the Hamiltonian $\mathcal{H}$.  Hence
(\ref{ev}) cannot be written in a bihamiltonian form with respect to the
Hamiltonian pair under study.

On the other hand, (\ref{ev}) is Hamiltonian with respect to
$\partial\mathfrak{P}/\partial t$, that is, we can write~\eqref{ev} in the
form
\[
\boldsymbol{u}_t=\frac{\partial\mathfrak{P}}{\partial t}
\left(\frac{\delta\mathcal{H}}{\delta\boldsymbol{u}}\right),
\]
where $\boldsymbol{u}=(u,v,p,q)^T$, $T$ indicates the transposed matrix, and
$\mathcal{H}=\int h \ dz$,
\[ h={\frac {z u_z^{2 }}{{v}^{2}}}+{\frac {zv_z^{2}}{{u}^{2}}}-2\,{\frac
  {zp{u_z}} {{v}^{2}}}+2\,{\frac {zq{v_z}}{{u}^{2}}}- \left( u-v \right)
\left( v+u \right) +{\frac { \left( {q}^{2}{v}^{ 2}+{u}^{2}{p}^{2} \right)
    z}{{u}^{2}{v}^{2}}}.
\]

\subsection*{Acknowledgments}
It is our great pleasure to thank E.V. Ferapontov for bringing the problem of
study of geometric aspects of integrability for (\ref{sys-eta}) to our
attention and for stimulating discussions.

This research was performed within the framework of and with
financial support of the OPVK program, project
CZ.1.07/2.300/20.0002. AS was also supported in part by the Ministry
of Education, Youth and Sports of the Czech Republic (M\v{S}MT
\v{C}R) under RVO funding for I\v{C}47813059, and by the Grant
Agency of the Czech Republic (GA \v{C}R) under grant P201/12/G028.
IK was also partially supported by the Simons-IUM fellowship.

The authors thank the referee for useful suggestions.

\end{document}